\documentclass[lettersize,journal]{IEEEtran}
\renewcommand{\baselinestretch}{1.0}
\usepackage{graphicx}
\usepackage{amsmath}
\usepackage{amsfonts}
\usepackage{amssymb}
\usepackage{amsthm}
\usepackage{color}
\usepackage{multicol}
\usepackage{multirow}
\usepackage{bm}
\usepackage{tabularx}
\usepackage{cite}
\usepackage{booktabs}
\usepackage{algorithm}
\usepackage{algpseudocode}
\usepackage{url}
\usepackage[T1]{fontenc}
\usepackage{tikz}  
\usetikzlibrary{shapes.geometric, arrows.meta, positioning, fit}

\newtheorem{theorem}{Theorem}
\newtheorem{lemma}{Lemma}
\newtheorem{corollary}{Corollary}

\theoremstyle{remark}

\newcommand{\Ybus}{\bm{Y}_{\mathrm{bus}}}
\newcommand{\Ybr}{\bm{Y}_{\mathrm{br}}}
\newcommand{\Sbus}{\bm{S}_{\mathrm{bus}}}
\newcommand{\Ibus}{\bm{I}_{\mathrm{bus}}}
\newcommand{\Vbus}{\bm{V}}
\newcommand{\Vnorm}{\hat{\bm{V}}}
\newcommand{\Jfull}{\bm{J}_{\mathrm{full}}}
\newcommand{\Jred}{\bm{J}_{\mathrm{red}}}
\newcommand{\Proj}{\mathcal{P}}
\newcommand{\diag}{\operatorname{diag}}
\newcommand{\struct}{\operatorname{struct}}

\begin{document}

\title{Symbolic-Permutation-Based Direct Sparse Jacobian Assembly for Accelerated Power-Flow Analysis}

\author{
  Tianshi Cheng,~\IEEEmembership{Member,~IEEE,}
  and~Venkata Dinavahi,~\IEEEmembership{Fellow,~IEEE}
  \thanks{This work was supported by the Natural Sciences and Engineering Research
  Council of Canada (NSERC) and Mitacs.}
  \thanks{\revision{The original manuscript was submitted in May~2026. This preprint includes updated results.}}
  \thanks{T.~Cheng is with the RTDS Technologies Inc., Winnipeg, MB R3T~2E1, Canada, and V.~Dinavahi is with the Department of Electrical and Computer
  Engineering, University of Alberta, Edmonton, AB T6G~2V4, Canada
  (e-mail: tcheng1@ualberta.ca; dinavahi@ualberta.ca).}
}

\newcommand{\revision}[1]{{\color{black}#1}}
\IEEEaftertitletext{%
  \begin{center}
    \begin{minipage}{0.96\textwidth}
      \centering\footnotesize
      \revision{This work has been submitted to the IEEE for possible publication.
      Copyright may be transferred without notice, after which this version may
      no longer be accessible.}
    \end{minipage}
  \end{center}
  \vspace{0.5\baselineskip}
}

\maketitle

\begin{abstract}
The repeated assembly of the sparse Jacobian matrix constitutes a primary computational bottleneck in the Newton--Raphson power-flow analysis. Through an algebraic equivariance analysis of the Jacobian matrix, this paper proposes a direct sparse Jacobian assembly method based on symbolic permutation. By pre-applying the permutation to the admittance matrix during a one-time symbolic phase, the runtime bus-type projection overhead is entirely eliminated. Consequently, complex Jacobian differentiation and global matrix assembly fuse into a single operation, yielding a single-pass, branchless algorithm that streams values directly into contiguous memory. The framework is realized in the open-source \texttt{RustPower} solver. Benchmarks on IEEE test systems and PEGASE 9241 demonstrate that the proposed assembly is \revision{$24.4$--$30.1\times$} faster than the MATPOWER-style baseline and \revision{$6.0$--$7.8\times$} faster than the Sch\"{a}fer--Braun optimized assembly. End-to-end, the \revision{grid-level} solve time is \revision{$11.7\times$--$154.3\times$ faster than pandapower~3.5.4 with Numba}, while leaving numerical results and convergence behavior unchanged. \revision{Compared with the native C++ LightSim2Grid~1.0.0, grid-level performance is comparable on IEEE~39 and $1.3\times$--$1.8\times$ faster on IEEE~118 and the PEGASE systems.} The method extends directly to general problems where a sparse Jacobian inherits its structure from a static network operator, including transient stability analysis, EMT simulation, and OPF Hessian assembly.
\end{abstract}

\begin{IEEEkeywords}
Power flow, Newton--Raphson method, Jacobian matrix, sparse matrices, symbolic analysis,
compressed sparse column, data-oriented design, high-performance computing.
\end{IEEEkeywords}

\section*{Nomenclature}
\begin{table}[htbp]
  \centering
  \begin{tabularx}{\columnwidth}{@{}p{1.55cm}X@{}}
    $n,\;b$            & Number of buses; number of branches. \\
    $\bm{A}$           & Branch--bus incidence matrix. \\
    $\Ybr,\;\Ybus$     & Primitive branch admittance; nodal admittance matrix. \\
    $\Vbus,\;\Ibus$    & Complex bus voltage and injected-current vectors. \\
    $|\Vbus|,\;\bm{\theta}$ & Bus voltage magnitude and angle vectors. \\
    $\Vnorm$           & Unit-modulus voltage, $\Vnorm=e^{j\bm{\theta}}$. \\
    $\Sbus$            & Complex bus power injection, $\bm{P}+j\bm{Q}$. \\
    $\Sbus^{\mathrm{sp}}$ & Specified (scheduled) bus power injection. \\
    $\bm{x},\;\bm{x}_{\mathrm{full}}$ & State vector of unknowns; full $2n$ state. \\
    $\bm{F}$           & Power-mismatch (residual) vector. \\
    $\nu$              & Newton--Raphson iteration index. \\
  \end{tabularx}
  \medskip
  \begin{tabularx}{\columnwidth}{@{}p{1.55cm}X@{}}
    $\Jfull,\;\Jred$   & Full and bus-type-reduced power-flow Jacobian. \\
    $\Proj$            & Reduction slicer $[\bm{I}_{N_{\text{red}}}\mid\bm{0}]$. \\
    $\bm{T}$           & Orthogonal bus-type permutation matrix. \\
    $\struct(\cdot)$   & Boolean (0/1) sparsity matrix of a matrix. \\
    $B_Y,\;B_J$        & Boolean sparsity matrices of $\Ybus$ and $\Jred$. \\
    $N_{pv},\;N_{pq}$  & Number of PV and PQ buses. \\
    $N_{\text{red}}$   & $N_{pv}+2N_{pq}$, the reduced-system dimension. \\
    NR                 & Newton--Raphson. \\
    CSC, CSR           & Compressed sparse column / row storage. \\
    ECS                & Entity--component--system. \\
  \end{tabularx}
\end{table}

\section{Introduction}
\label{sec:intro}
\IEEEPARstart{S}{teady} power-flow analysis is the computational kernel beneath planning, operations, contingency screening, and optimization of electric power systems. 
The Newton--Raphson (NR) method has dominated this task since its introduction by Tinney and Hart in 1967~\cite{tinney1967power}, together with optimally ordered sparse factorization~\cite{tinney1967direct} establishing the direct-solver paradigm that remains the standard today. 
At each Newton step, the Jacobian matrix is assembled and the resulting sparse linear system is solved by direct methods such as KLU~\cite{davis2010klu}, which separate a one-time symbolic analysis from repeated numeric re-factorization, making the latter an allocation-free, cache-efficient pipeline. 
For extremely large or ill-conditioned systems where the Jacobian condition number degrades and direct factorization becomes numerically unstable, the inner linear system may instead be solved by preconditioned GMRES or BiCGStab~\cite{idema2012scalable,yetkin2023recycling}; 
however, the pre-conditioner still requires the explicit Jacobian structure, and for standard benchmarks these Newton--Krylov variants remain slower than a well-tuned direct solver, leaving LU-based NR as the workhorse in practice. Robustness extensions---Iwamoto--Tamura damping~\cite{iwamoto1981load}, Levenberg--Marquardt regularization~\cite{lagace2008improving,pires2019robust}, and continuous Newton methods~\cite{milano2009open,yu2025cfn}---addressed convergence fragility at the cost of more outer iterations and hence more frequent Jacobian evaluations.

Consequently, the assembly of the Jacobian matrix---an indispensable step in every Newton-based iteration---has superseded sparse factorization as the dominant computational cost.
Classical textbooks often formulated the Jacobian by element-wise differentiation of the scalar polar power-flow equations, incurring $\Theta(nnz)$ transcendental function evaluations and tedious index bookkeeping for each entry~\cite{bergen2000power}. 
Zimmerman et al.~\cite{zimmerman2011matpower} revolutionized this practice by maintaining the algebraic integrity of the network entirely in the complex domain through Wirtinger Calculus. 
The resulting matrix-differential formulation collapses the transcendental burden from a traditional $\Theta(nnz)$ element-wise sweep to a single $\Theta(n)$ vector map per iteration, and it became the foundational paradigm for modern power-flow software. 
This complex-domain approach was ported to Python as PYPOWER~\cite{pypower}, extended into an object-oriented modeling environment in pandapower~\cite{thurner2018pandapower}, and reimplemented with a C++/Eigen backend in LightSim2Grid~\cite{lightsim2grid} for native speed.

In 2018, Sch\"{a}fer and Braun~\cite{schaefer2018jacobian} were among the first to systematically identify the Jacobian-assembly bottleneck, replacing MATPOWER's block-stacking pipeline with an optimized CSR formation fused by Numba JIT compilation, reporting 3--14$\times$ assembly speedups compared to MATPOWER. The C++ LightSim2Grid~\cite{lightsim2grid} also adopted a similar element-level assembly approach.
Subsequent work has pushed the parallel frontier: heterogeneous CPU--GPU solvers exploit invariant sparsity across ensemble members to amortize symbolic factorization over thousands of N--1 or Monte-Carlo cases~\cite{wang2021fast,zhou2017gpu,zhou_gpuPPF} as well as large-scale optimization problems~\cite{ryu_gpuopf}. 

Yet all of these works and advances---whether in MATLAB, Python, C++, or CUDA---share a common architectural pitfall:
they remain trapped in the three-step ``differentiate, partition, assemble'' pipeline of Jacobian assembly for Newton iteration methods.
Differentiation evaluates full $2n \times 2n$ complex sensitivities with thousands of unused computations of PV and slack buses; 
partitioning scatters entries through pointer-chasing and dynamic indexing; 
assembly allocates temporary blocks and copies them into the global Jacobian. 
These structural overheads dominate the \revision{iteration loop}---\revision{71--85\%} of per-iteration time (Fig.~\ref{fig:breakdown})---and they seem to be \emph{fundamental}: no engineering tuning can eliminate the three steps in the current open-source implementations.

As shown in Fig.~\ref{fig:overview}, this paper proposes a symbolic-permutation-based direct sparse Jacobian assembly method that significantly reduces the iteration structural overhead in Newton--Raphson power-flow analysis. The work consists of three main contributions:
\begin{enumerate}
\item \textbf{Permutation-equivariance theorem.} 
Theorem~\ref{thm:iso} proves that the Jacobian assembly commutes with orthogonal bus permutation, identifying the bus-type projection as the algebraic root cause of block-assembly fragmentation and enabling its hoisting to a one-time symbolic phase~(Section~\ref{sec:symbolic}).

\item \textbf{Single-pass direct-fill algorithm.} 
With the Jacobian structure fixed by Corollaries~\ref{cor:reduced}--\ref{cor:boolean}, per-iteration assembly reduces to a branch-free, allocation-free $\Theta(nnz)$ sweep: each column of the admittance matrix is differentiated directly into its final position in the reduced Jacobian, eliminating intermediate sensitivity blocks, per-iteration sparse structure reconstruction, and wasted derivative evaluations for dropped states~(Section~\ref{sec:symbolic}-Section~\ref{sec:impl}).

\item \textbf{Open-source \texttt{RustPower} implementation.}
The method (noted as V2) is implemented in \texttt{RustPower}, an open-source data-oriented ECS framework.
Isolated assembly benchmarks demonstrate \revision{$24.4\times$--$30.1\times$} speedup over baseline MATPOWER-style~\cite{zimmerman2011matpower} formulations (V0) and \revision{$6.0\times$--$7.8\times$} over the optimized pandapower-style algorithm~\cite{schaefer2018jacobian} (V1), eliminating \revision{$95.9\%$--$96.7\%$} of the baseline assembly overhead (and up to \revision{$87.1\%$} of the residual overhead in V1). 
\revision{End-to-end ablation confirms $3.2\times$--$4.5\times$ overall solve speedup over V0 and $1.6\times$--$2.0\times$ over V1 with the same KLU backend~(Section~\ref{sec:results}).}
\end{enumerate}
 
Section~\ref{sec:conclusion} concludes the paper and discusses future extensions to transient stability analysis, EMT simulation, and OPF Hessian assembly.
 
\section{Mathematical Formulation}
\label{sec:formulation}

\subsection{Network and Admittance Matrix}
In steady-state analysis, a power network of $n$ buses and $b$ branches is described by its branch--bus incidence matrix
$\bm{A}$; where $\Ybr=\operatorname{blkdiag}(y_1,\dots,y_b)$ is the branch
admittance matrix, the nodal admittance matrix can be expressed as 
\begin{equation}
  \Ybus=\bm{A}^{\!\top}\Ybr\,\bm{A}.
  \label{eq:ybus}
\end{equation}
The network matrix $\Ybus$ fundamentally inherits the properties of a weighted Laplacian matrix.

Buses are ordered as $[\,\mathrm{PQ}\mid\mathrm{PV}\mid\mathrm{slack}\,]$, by the number of state variables each bus contributes: PQ buses carry both $\theta$ and $|V|$ as states and lead, PV buses carry $\theta$ only and follow, slack buses carry no state and close the list. The ``missing state'' types therefore concentrate on the right. The ordering is realised by a one-time permutation $\bm{T}$ applied before iteration; the pre-ordered admittance matrix $\bm{T}\Ybus\bm{T}^{\top}$ locks the sparse topology, and every subsequent block split coincides with a contiguous row range.

\subsection{The Power-Flow Equation}
The complex bus power injection is the nonlinear function of the bus voltage given by
\begin{equation}
  \Sbus(\Vbus)=\diag(\Vbus)\,\overline{\Ybus\Vbus},
  \label{eq:sbus}
\end{equation}
with $\overline{(\cdot)}$ complex conjugation. Expanded element-wise, \eqref{eq:sbus} becomes the classical polar
form
\begin{align}
P_i &= v_i \sum_{k \sim i} v_k \bigl( G_{ik}\cos\theta_{ik} + B_{ik}\sin\theta_{ik} \bigr),   \label{eq:p_scalar} \\
Q_i &= v_i \sum_{k \sim i} v_k \bigl( G_{ik}\sin\theta_{ik} - B_{ik}\cos\theta_{ik} \bigr), \label{eq:q_scalar}
\end{align}
where $\theta_{ik} = \theta_i - \theta_k$, $G_{ik} + jB_{ik}$ denotes the $(i,k)$-th element of the admittance matrix $\Ybus$, and the summation runs over all buses $k$ connected to bus $i$.

Although these scalar formulations are common in textbooks to directly differentiate and build the Jacobian, this representation is highly inefficient. Every individual entry in the resulting Jacobian inherits  distinct transcendental pairs $(\sin\theta_{ik}, \cos\theta_{ik})$, forcing conventional element-wise differentiation to be a slow $\Theta(\mathrm{nnz})$ operation per iteration.

This heavy algebraic penalty historically obscured the potential of efficient direct Jacobian assembly. Consequently, modern tools adopted complex matrix differentiation to bypass the high transcendental function cost.
 \begin{figure*}[t]
\centering
  \includegraphics[width=2\columnwidth]{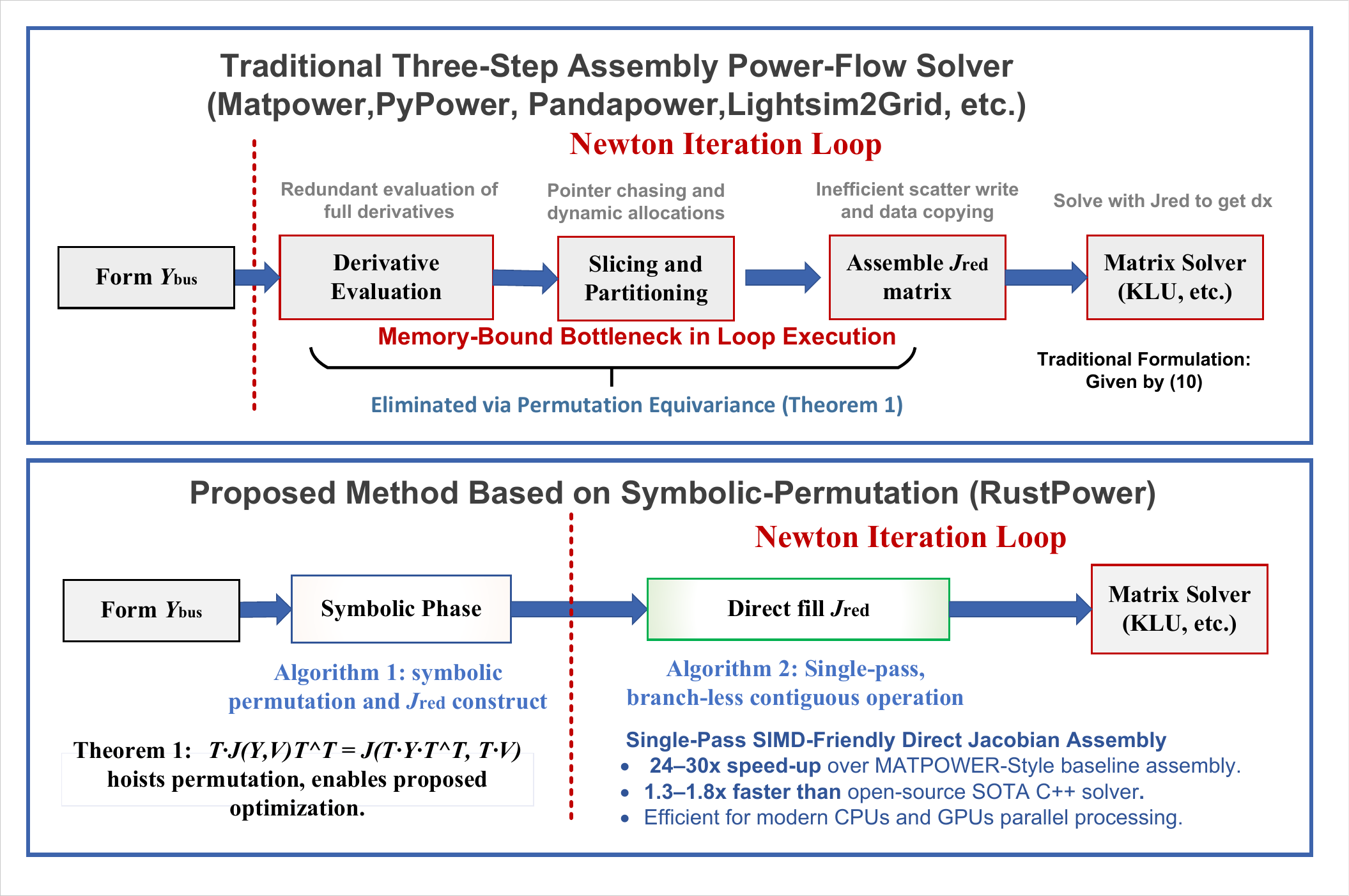}
\caption{Overview of the proposed method's core innovation. By hoisting topology-dependent permutations offline via Theorem 1, the proposed framework eliminates runtime slicing and scattered writes, condensing the traditional loop into a single-pass direct-fill stream to achieve up to 21$\times$ speedup and outperform the state-of-the-art open-source C++ solver.}
\label{fig:overview}
\end{figure*}
\subsection{Analytical Matrix Differentiation and Jacobian Assembly}
 
The Newton--Raphson method eliminates the non-linear power mismatch by iteratively solving the linearized system
\begin{equation}
\bm{J}^{(\nu)}\Delta\bm{x}^{(\nu)}=-\bm{F}\bigl(\bm{x}^{(\nu)}\bigr), \quad \bm{x}^{(\nu+1)}=\bm{x}^{(\nu)}+\Delta\bm{x}^{(\nu)},
\label{eq:nr}
\end{equation}
where $\nu$ is the iteration counter. In this holistic formulation, the global residual vector $\bm{F}(\bm{x}) = \bm{0}$ is directly mapped from the complex algebraic network constraint
\begin{equation}
 \bm{F}(\Vbus) = \Sbus^{\mathrm{sp}} - \diag(\Vbus)\overline{\Ybus\Vbus} = \bm{0},
\label{eq:deltaS}
\end{equation}

and the Jacobian is expressed in matrix form as
\begin{equation}
  \Jfull=\begin{bmatrix}\partial\Sbus/\partial\bm{\theta}  &
  \partial\Sbus/\partial|\Vbus|\end{bmatrix}  ,
  \label{eq:jfull}
\end{equation}
where $\Sbus^{\mathrm{sp}}$ contains the constant constraints (full $\bm{S}$ for PQ buses, $P$ only for PV buses), and $\Vbus$ has constant constraints ($|\bm{V}|$ for PV, $\bm{V}\angle\theta$ for slack buses).
In their pioneering framework, Zimmerman et al.~\cite{zimmerman2011matpower} introduced an inspiring, mathematically elegant matrix-differential method for MATPOWER. Instead of expanding the governing power equations into scalar polar loops prematurely, they maintained the algebraic integrity of the network entirely within the complex domain through the continuous exponential voltage alignment $\bm{V} = \diag(|\bm{V}|)\bm{\hat{V}}$ with $\bm{\hat{V}} = e^{j\bm{\theta}}$. By exploiting the compact network current dependency $\bm{I}_{\mathrm{bus}} = \Ybus \bm{V}$, this matrix-differential formulation directly yields the exact complex sensitivity operators:
\begin{align}
\frac{\partial\bm{S}}{\partial\bm{\theta}} &= j\diag(\bm{V})\bigl(\diag(\overline{\bm{I}_{\mathrm{bus}}}) - \overline{\Ybus}\diag(\overline{\bm{V}})\bigr), \label{eq:dStheta}\\
\frac{\partial\bm{S}}{\partial|\bm{V}|} &= \diag(\bm{V})\overline{\Ybus}\diag(\overline{\bm{\hat{V}}}) + \diag(\overline{\bm{I}_{\mathrm{bus}}})\diag(\bm{\hat{V}}). \label{eq:dSvm}
\end{align}
This holistic complex-domain formulation entirely eliminates the heavy trigonometric evaluations from the differentiation phase, successfully collapsing the transcendental computational burden from a traditional $\Theta(\mathrm{nnz})$ down to a single node-level $\Theta(n)$ vector map per iteration.

To mitigate the heavy computational burden of sparse matrix-vector products in conventional complex-domain formulations, Sch\"{a}fer and Braun~\cite{schaefer2018jacobian} brilliantly deployed Numba JIT compilation in pandapower to fuse these underlying operations. By mapping elements directly based on the structural invariance of the network topology, their approach successfully streamlined derivative evaluations and bypassed some of the intermediate matrix products, delivering an impressive 3$\times$--14$\times$ speedup.

However, while this element-wise fusion successfully accelerated the differentiation phase, the underlying solver remains trapped in a classic architectural pitfall. Because the complex power-flow equations are non-holomorphic, conventional implementations must maintain extensive layout-mapping tables to partition derivatives into discrete $J_{11}$ through $J_{22}$ sub-blocks. This structural dependency burdens the \revision{iteration loop} with continuous pointer-dereferencing and non-contiguous scatter writes that completely disrupt hardware prefetching.

To resolve this architectural pitfall, the power-flow equations demand further mathematical analysis.

\section{Symbolic Matrix Analysis of Domain Projection and Reordering}
\label{sec:symbolic}
 
\subsection{Non-Holomorphic Deadlock and the Dual Real-Mapping Paradigm}

The fundamental mathematical bottleneck of conventional power-flow algorithms originates from a non-holomorphic deadlock in the complex domain. The governing complex power injection equation $\Sbus(\Vbus) = \diag(\Vbus)\overline{\Ybus\Vbus}$ violates the Cauchy-Riemann conditions due to the non-differentiable nature of the conjugation operator, forcing a paradigm shift that maps the complex sensitivities onto a dual real-coordinate system
\begin{align}
\bm{J}_{\mathrm{red}} & = \mathcal{P} \left( \bm{T}_{\mathrm{block}} \bm{J}_{\mathrm{full}}(\bm{Y}_{\mathrm{bus}}, \bm{V}) \bm{T}_{\mathrm{block}}^{\top} \right) \mathcal{P}^{\top} \\ \notag
& = \mathcal{P} \begin{bmatrix} \bm{T} & \bm{0} \\ \bm{0} & \bm{T} \end{bmatrix} \begin{bmatrix} \Re\left(\frac{\partial\bm{S}}{\partial\bm{\theta}}\right) & \Re\left(\frac{\partial\bm{S}}{\partial|\bm{V}|}\right) \\   \Im\left(\frac{\partial\bm{S}}{\partial\bm{\theta}}\right) & \Im\left(\frac{\partial\bm{S}}{\partial|\bm{V}|}\right) \end{bmatrix} \begin{bmatrix} \bm{T}^{\top} & \bm{0} \\ \bm{0} & \bm{T}^{\top} \end{bmatrix} \mathcal{P}^{\top}.
\end{align}

Under this traditional real-mapping paradigm, contemporary implementations—including state-of-the-art automatic differentiation (AutoDiff) frameworks—are bound by a conventional core assumption: that the unconstrained global matrix $\bm{J}_{\mathrm{full}}$ must be explicitly materialized prior to handling any spatial configurations. Consequently, traditional power-flow solvers and AutoDiff engines are restricted to evaluating the full, unconstrained matrix, remaining fundamentally agnostic to external boundary constraints and spatial node partitions, which inherently forces conventional architectures to execute three computationally catastrophic operations inside every \revision{iteration of the loop}:
\begin{enumerate}
    \item \textbf{Redundant Differentiation:} Wasting significant CPU cycles and arithmetic throughput evaluating derivatives for dropped states (e.g., slack and PV bus voltage magnitudes) that are subsequently discarded.
  \item \textbf{Indirect Spatial Partitioning:} Relying on extensive layout-mapping tables to index and route derivatives into traditional $\bm{J}_{11}$ through $\bm{J}_{22}$ sub-blocks, introducing severe pointer-dereferencing and indirection overheads inside the \revision{iteration loop}.
\item \textbf{Non-Contiguous Scatter Assembly:} Enforcing dimensional reduction via scattered writes that distribute values across disjoint matrix segments, which completely disrupts hardware prefetching, invalidates CPU cache lines, and severely limits parallel scalability.
\end{enumerate}

These limitations transform the iteration hot path into a performance graveyard of branch mispredictions and cache invalidations, which also significantly lower the parallel efficiency on modern CPUs and GPUs.

\subsection{Permutation Equivariance of the Sensitivity}
The complex sensitivities~\eqref{eq:dStheta}--\eqref{eq:dSvm} are built from three
primitive operations on $\Ybus$ and $\Vbus$: diagonal scaling, matrix product, and
complex conjugation. Each of them commutes with conjugation by an orthogonal
permutation. The elementary fact at the bottom of the proof is the following

\begin{lemma}[Diagonal--permutation commutation]
\label{lem:diag}
For any vector $\bm{x}$ and orthogonal permutation matrix $\bm{T}$ with
$\bm{T}\bm{T}^{\!\top} = \bm{I}$,
\begin{equation}
  \bm{T}\,\diag(\bm{x})\,\bm{T}^{\!\top} \;=\; \diag(\bm{T}\bm{x}).
  \label{eq:diag-perm}
\end{equation}
\end{lemma}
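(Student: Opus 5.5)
The plan is to write $\bm{T}$ through an underlying bijection $\pi$ of the index set $\{1,\dots,n\}$ and compare the two sides of \eqref{eq:diag-perm} entry by entry. A permutation matrix has exactly one unit entry in each row and each column. So there is a bijection $\pi$ with $T_{ik}=\delta_{k,\pi(i)}$, meaning $(\bm{T}\bm{y})_i = y_{\pi(i)}$ for every vector $\bm{y}$. Orthogonality $\bm{T}\bm{T}^{\!\top}=\bm{I}$ is then automatic, and $(\bm{T}^{\!\top})_{kl}=T_{lk}=\delta_{k,\pi(l)}$. This fixes the right-hand side at once: $\diag(\bm{T}\bm{x})$ is the diagonal matrix with $(i,i)$ entry $x_{\pi(i)}$.

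Next I expand the left-hand side using $\diag(\bm{x})_{kl}=x_k\delta_{kl}$:
\begin{equation*}
\bigl(\bm{T}\diag(\bm{x})\bm{T}^{\!\top}\bigr)_{il}
=\sum_{k,m} \delta_{k,\pi(i)}\, x_k\delta_{km}\, \delta_{m,\pi(l)}
= x_{\pi(i)}\,\delta_{\pi(i),\pi(l)}.
\end{equation*}
Because $\pi$ is injective, $\delta_{\pi(i),\pi(l)}=\delta_{il}$. The $(i,l)$ entry therefore equals $x_{\pi(i)}\delta_{il}$, which is exactly the $(i,l)$ entry of $\diag(\bm{T}\bm{x})$. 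That proves the identity.

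As a cross-check, a coordinate-free route gives the same result. Write $\diag(\bm{x})=\sum_k x_k \bm{e}_k\bm{e}_k^{\!\top}$ and note $\bm{T}\bm{e}_k=\bm{e}_{\pi^{-1}(k)}$. Then
\begin{equation*}
\bm{T}\diag(\bm{x})\bm{T}^{\!\top}=\sum_k x_k\,\bm{e}_{\pi^{-1}(k)}\bm{e}_{\pi^{-1}(k)}^{\!\top}.
\end{equation*}
Reindexing with $i=\pi^{-1}(k)$ turns this into $\sum_i x_{\pi(i)}\bm{e}_i\bm{e}_i^{\!\top}=\diag(\bm{T}\bm{x})$.

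There is no genuine obstacle here. The one point that needs care is the orientation convention, that is, whether $\bm{T}$ acts by $\pi$ or by $\pi^{-1}$. I will fix it once and use the same convention on both sides, and either choice yields the identity. The result also holds for complex $\bm{x}$, which is what later use with $\bm{V}$, $\overline{\bm{I}_{\mathrm{bus}}}$ and $\bm{\hat V}$ requires, since $\bm{T}$ is real and the argument never uses conjugation.
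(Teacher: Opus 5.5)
Your proof is correct and follows essentially the same route as the paper: an entrywise computation of $\bm{T}\diag(\bm{x})\bm{T}^{\!\top}$ via the underlying bijection $\pi$, with injectivity collapsing the product of Kronecker deltas to $\delta_{il}$. The only differences are cosmetic. You use the inverse convention $T_{ik}=\delta_{k,\pi(i)}$, so your diagonal entries come out as $x_{\pi(i)}$ rather than the paper's $x_{\pi^{-1}(i)}$ from $T_{ij}=\delta_{i,\pi(j)}$, and your rank-one cross-check is a harmless extra.
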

\begin{proof}
Let $\pi$ be the permutation with $T_{ij} = \delta_{i,\pi(j)}$. Then
\begin{align}
  \bigl(\bm{T}\diag(\bm{x})\bm{T}^{\!\top}\bigr)_{ij}
  & =  \sum_{k} \delta_{i,\pi(k)}\,\delta_{j,\pi(k)}\,x_k   
  & = \delta_{ij}\,x_{\pi^{-1}(i)} \notag \\
  & = \bigl(\diag(\bm{T}\bm{x})\bigr)_{ij}. 
\end{align}
\end{proof}

\begin{theorem}[Permutation equivariance of the Jacobian]
\label{thm:iso}
Let $J(\bm{Y},\bm{V})$ denote either complex sensitivity matrix of
\eqref{eq:dStheta}--\eqref{eq:dSvm}. For any orthogonal permutation
$\bm{T}\in\{0,1\}^{n\times n}$,
\begin{equation}
  \bm{T}\,J(\bm{Y},\bm{V})\,\bm{T}^{\!\top}
  \;=\;
  J\!\bigl(\bm{T}\bm{Y}\bm{T}^{\!\top},\,\bm{T}\bm{V}\bigr).
  \label{eq:equivariance}
\end{equation}
\end{theorem}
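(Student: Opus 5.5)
The plan is a direct algebraic verification: write both sensitivities as compositions of the three primitive operations, then push $\bm{T}(\cdot)\bm{T}^{\!\top}$ through each factor. Since $\bm{T}^{\!\top}\bm{T}=\bm{I}$, any product $\bm{A}\bm{B}$ can be conjugated factorwise:
\begin{equation*}
\bm{T}\bm{A}\bm{B}\bm{T}^{\!\top}=(\bm{T}\bm{A}\bm{T}^{\!\top})(\bm{T}\bm{B}\bm{T}^{\!\top}).
\end{equation*}
Conjugation is linear, so it also distributes over the sums in \eqref{eq:dStheta}--\eqref{eq:dSvm}. The problem therefore reduces to checking how each individual factor transforms.

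\textbf{Step 1: transformation of the primitive quantities.} Write $\bm{Y}'=\bm{T}\bm{Y}\bm{T}^{\!\top}$ and $\bm{V}'=\bm{T}\bm{V}$. Because $\bm{T}$ is real ($0/1$), complex conjugation commutes with it:
\begin{equation*}
\overline{\bm{T}\bm{Y}\bm{T}^{\!\top}}=\bm{T}\overline{\bm{Y}}\bm{T}^{\!\top}, \qquad \overline{\bm{T}\bm{V}}=\bm{T}\overline{\bm{V}}.
\end{equation*}
The derived vectors permute covariantly:
\begin{equation*}
\bm{I}'=\bm{Y}'\bm{V}'=\bm{T}\bm{Y}\bm{T}^{\!\top}\bm{T}\bm{V}=\bm{T}\bm{I}_{\mathrm{bus}}.
\end{equation*}
Similarly, $\hat{\bm{V}}'=e^{j\bm\theta'}$ with $\bm\theta'=\bm{T}\bm\theta$, so $\hat{\bm{V}}'=\bm{T}\hat{\bm{V}}$. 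This holds because an elementwise map commutes with any permutation of the entries, and the same argument gives $|\bm{V}'|=\bm{T}|\bm{V}|$. Applying Lemma~\ref{lem:diag} then turns every diagonal factor into a conjugated one, for example $\diag(\bm{V}')=\bm{T}\diag(\bm{V})\bm{T}^{\!\top}$ and $\diag(\overline{\bm{I}'})=\bm{T}\diag(\overline{\bm{I}_{\mathrm{bus}}})\bm{T}^{\!\top}$.

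\textbf{Step 2: assembly.} Substitute these identities into \eqref{eq:dStheta}. For the angle sensitivity this gives
\begin{equation*}
J_\theta(\bm{Y}',\bm{V}')=j\,\bm{T}\diag(\bm{V})\bm{T}^{\!\top}\Bigl(\bm{T}\diag(\overline{\bm{I}_{\mathrm{bus}}})\bm{T}^{\!\top}-\bm{T}\overline{\bm{Y}}\bm{T}^{\!\top}\bm{T}\diag(\overline{\bm{V}})\bm{T}^{\!\top}\Bigr).
\end{equation*}
Collapsing each interior $\bm{T}^{\!\top}\bm{T}=\bm{I}$ yields $\bm{T}J_\theta(\bm{Y},\bm{V})\bm{T}^{\!\top}$. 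The magnitude sensitivity \eqref{eq:dSvm} is handled the same way, term by term. Its first term is a triple product, and its second is a product of two diagonals.

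\textbf{Main obstacle.} The only point needing care is that $J$ depends on $\bm{V}$ not just linearly but through the derived quantities $\bm{I}_{\mathrm{bus}}$, $\hat{\bm{V}}$ and $|\bm{V}|$. One must confirm that each of these, computed from the permuted arguments, is exactly the permuted original. For $\bm{I}_{\mathrm{bus}}$ this relies on orthogonality of $\bm{T}$. For $\hat{\bm{V}}$ and $|\bm{V}|$ it relies on their being entrywise functions of $\bm{V}$. Once this is settled, the rest is bookkeeping with Lemma~\ref{lem:diag}.
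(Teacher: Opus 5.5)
Your proposal is correct and is essentially the paper's proof. You push $\bm{T}(\cdot)\bm{T}^{\!\top}$ through products via $\bm{T}^{\!\top}\bm{T}=\bm{I}$, through $\diag$ via Lemma~\ref{lem:diag}, and through complex conjugation because $\bm{T}$ is real, with $\bm{I}_{\mathrm{bus}}(\bm{T}\bm{Y}\bm{T}^{\!\top},\bm{T}\bm{V})=\bm{T}\bm{I}_{\mathrm{bus}}(\bm{Y},\bm{V})$ as the key input. You also check explicitly that $\hat{\bm{V}}$ and $|\bm{V}|$ permute covariantly as entrywise functions of $\bm{V}$; the paper leaves this implicit, and it is worth keeping.
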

\begin{proof}
Each primitive operation in $J$ commutes with $\bm{T}$-conjugation:
\begin{itemize}
  \item matrix product, $\bm{T}\bm{A}\bm{B}\bm{T}^{\!\top}
        = (\bm{T}\bm{A}\bm{T}^{\!\top})(\bm{T}\bm{B}\bm{T}^{\!\top})$ since
        $\bm{T}^{\!\top}\bm{T} = \bm{I}$;
  \item diagonalisation, $\bm{T}\diag(\bm{x})\bm{T}^{\!\top} = \diag(\bm{T}\bm{x})$ by
        Lemma~\ref{lem:diag};
  \item complex conjugation, $\bm{T}\,\overline{\bm{A}}\,\bm{T}^{\!\top}
        = \overline{\bm{T}\bm{A}\bm{T}^{\!\top}}$ since $\bm{T}$ is real.
\end{itemize}
Apply these to \eqref{eq:dStheta}--\eqref{eq:dSvm} entrywise, using
$\bm{T}\bm{I}_{\mathrm{bus}}(\bm{Y},\bm{V})
 = (\bm{T}\bm{Y}\bm{T}^{\!\top})(\bm{T}\bm{V})
 = \bm{I}_{\mathrm{bus}}(\bm{T}\bm{Y}\bm{T}^{\!\top},\bm{T}\bm{V})$.
\end{proof}

 $\bm{J}_{\mathrm{full}}$ inherits this equivariance because the transformation $\bm{T}$ is real. We select a state-level permutation that concentrates all active variables $[\bm{\theta}_{\mathrm{PQ}}; \bm{\theta}_{\mathrm{PV}}; |\bm{V}|_{\mathrm{PQ}}]$ at the front, while trailing the omitted PV and slack states at the end. The reduction operator simplifies to a trivial memory slicer, completely eliminating runtime coordinate projections.

\begin{corollary}[Pre-permutation absorbs the projection]
\label{cor:reduced}
Let $\mathcal{P}=[\,\bm{I}_{a}\;\,\bm{0}\,]$ denote the static truncation matrix. 
The row-and-column permutation $\bm{T}$ isolates the active variables, partitioning the transformed Jacobian into a block layout:
\begin{equation*}
  \bm{T}\bm J_{\mathrm{full}}\bm{T}^{\!\top}
  \;=\;
  \begin{bmatrix} \bm{J}_{aa} & \bm{J}_{ad} \\ \bm{J}_{da} & \bm{J}_{dd} \end{bmatrix},
\end{equation*}
where $a$ and $d$ denote the active and dropped domains. Because the block structuring forces the projection to collapse into identity matrices, the truncation operator $\mathcal{P}$ directly absorbs the transformation. Combining this layout with Theorem~\ref{thm:iso} yields:
\begin{equation}
  \Jred
  \;=\;
  \mathcal{P}\,\Jfull\!\bigl(\bm{T}\Ybus\bm{T}^{\!\top},\,\bm{T}\Vbus\bigr)\,\mathcal{P}^{\!\top} \;\equiv\; \bm{J}_{aa}.
  \label{eq:absorb}
\end{equation}
The bus permutation $\bm{T}$ is applied once before execution. Due to this block alignment, the runtime projection is entirely absorbed into the source matrix, leaving the solver to extract the top-left block via a single memory slice.
\end{corollary}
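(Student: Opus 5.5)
The plan is to separate the two permutations that Corollary~\ref{cor:reduced} silently merges. The first is the bus-level permutation $\bm{T}$ of Theorem~\ref{thm:iso}. The second is the state-level selection that keeps $[\bm{\theta}_{\mathrm{PQ}};\bm{\theta}_{\mathrm{PV}};|\bm{V}|_{\mathrm{PQ}}]$.

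First, I write the reduced Jacobian in its defining form. This is $\Jred=\mathcal{P}\,\bm{T}_{\mathrm{block}}\Jfull(\Ybus,\Vbus)\bm{T}_{\mathrm{block}}^{\top}\mathcal{P}^{\top}$, composed with a fixed state ordering, where $\bm{T}_{\mathrm{block}}=\diag(\bm{T},\bm{T})$.

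Next, I push $\bm{T}_{\mathrm{block}}$ inside. Each of the four real blocks of $\Jfull$ is $\Re$ or $\Im$ of one complex sensitivity. Since $\bm{T}$ is real, $\bm{T}\Re(J)\bm{T}^{\top}=\Re(\bm{T}J\bm{T}^{\top})$, and likewise for $\Im$. Applying Theorem~\ref{thm:iso} to each block then gives
\[
\bm{T}_{\mathrm{block}}\Jfull(\Ybus,\Vbus)\bm{T}_{\mathrm{block}}^{\top}=\Jfull(\bm{T}\Ybus\bm{T}^{\top},\bm{T}\Vbus).
\]
This is the ``$\Jfull$ inherits equivariance'' remark made precise, and it is routine block multiplication.

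The second step handles the state level. Because buses are ordered $[\mathrm{PQ}\mid\mathrm{PV}\mid\mathrm{slack}]$, the rows of the permuted $\Jfull$ are $[P_{\mathrm{PQ}},P_{\mathrm{PV}},P_{\mathrm{sl}},Q_{\mathrm{PQ}},Q_{\mathrm{PV}},Q_{\mathrm{sl}}]$, and the columns are ordered the same way in $(\bm{\theta},|\bm{V}|)$. The active set is therefore the union of two contiguous ranges, not a prefix. I introduce the fixed $2n\times 2n$ permutation $\bm{\Sigma}$ that moves $P_{\mathrm{PQ}},P_{\mathrm{PV}},Q_{\mathrm{PQ}}$ to the front, and the corresponding columns likewise. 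Conjugating by $\bm{\Sigma}$ produces the $2\times2$ block layout $\begin{bmatrix}\bm{J}_{aa}&\bm{J}_{ad}\\ \bm{J}_{da}&\bm{J}_{dd}\end{bmatrix}$ with $a=N_{\mathrm{red}}$. I then verify that $\mathcal{P}\,\bm{\Sigma}=[\bm{I}_a\;\bm{0}]\bm{\Sigma}$ is exactly the original selector of active equations and states. This holds because $\bm{\Sigma}$ sends the active coordinates, in order, to positions $1..a$. As a result $\mathcal{P}(\cdot)\mathcal{P}^{\top}$ extracts $\bm{J}_{aa}$, which is the claim ``the projection collapses into identity blocks.''

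Combining the two steps yields \eqref{eq:absorb}, with the permutation in the statement understood as $\bm{\Sigma}\bm{T}_{\mathrm{block}}$. Both factors are state-independent, so the product is computed once in the symbolic phase.

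The main obstacle is the second step. Theorem~\ref{thm:iso} only covers bus-level conjugation, whereas $\bm{\Sigma}$ mixes the $P$/$Q$ and $\bm{\theta}$/$|\bm{V}|$ blocks and does not act on $\Ybus$. I therefore would not claim that $\bm{\Sigma}$ is absorbed into the arguments of $\Jfull$. Instead, I would argue that $\bm{\Sigma}$ is a fixed relabeling which, given the bus ordering, makes the active index set contiguous within each of the two halves. The statement $\Jred=\bm{J}_{aa}$ then follows by checking that entries agree index-by-index. Concretely, the $(\alpha,\beta)$ entry of $\bm{J}_{aa}$ is $\partial F_{\sigma(\alpha)}/\partial x_{\sigma(\beta)}$, evaluated at the permuted $(\Ybus,\Vbus)$, which equals the corresponding entry of $\Jred$.
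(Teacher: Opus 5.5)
Your proposal is correct and follows the paper's own route. You lift Theorem~\ref{thm:iso} blockwise to $\Jfull$ using that $\bm{T}$ is real (this is the paper's remark that $\Jfull$ ``inherits'' the equivariance), then apply the state-level reordering that brings $[\bm{\theta}_{\mathrm{PQ}};\bm{\theta}_{\mathrm{PV}};|\bm{V}|_{\mathrm{PQ}}]$ to the front, so that $\mathcal{P}$ only slices out $\bm{J}_{aa}$. Your explicit factorization $\bm{\Sigma}\bm{T}_{\mathrm{block}}$ sharpens the argument rather than departing from it: the statement conflates the bus-level and state-level permutations, and \eqref{eq:absorb} holds literally only when $\Jfull$ is read in the $\bm{\Sigma}$-reordered layout, because under the plain $[\Re;\Im]$ stacking the active indices are in general not a prefix.
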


\begin{figure}[t]
  \centering
  \includegraphics[width=\columnwidth]{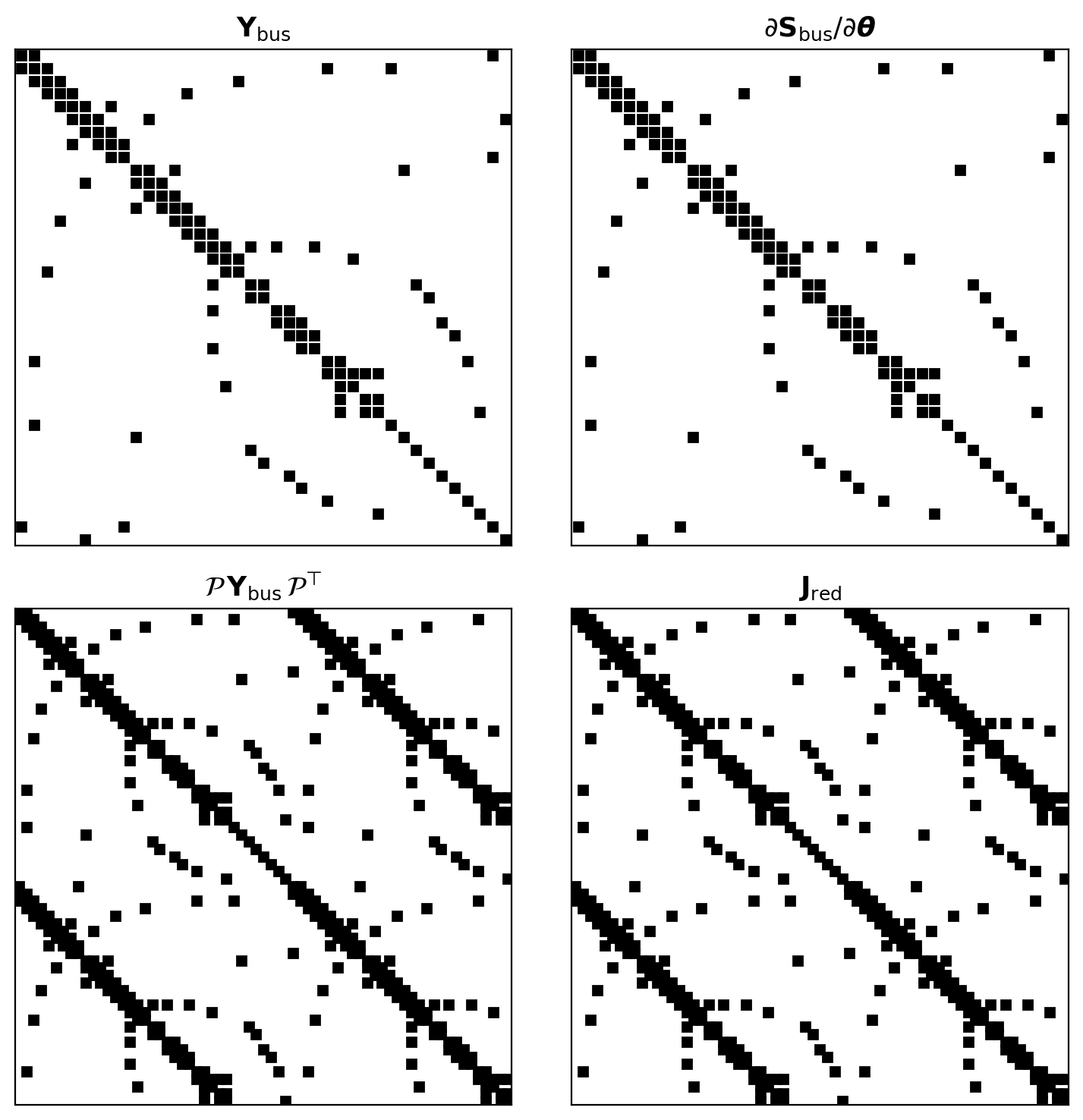}
  \caption{Permutation equivariance and reduction on IEEE 39, buses in
  $[\mathrm{PQ}\mid\mathrm{PV}\mid\mathrm{slack}]$ order. \emph{Top:} the complex
  sensitivity $\partial\Sbus/\partial\bm{\theta}$ inherits $\Ybus$'s nonzero positions
  (Theorem~\ref{thm:iso}; $\partial\Sbus/\partial|\Vbus|$ is identical, omitted for
  space). \emph{Bottom:} $\Jred$ is the top-left $(N_{pv}+2N_{pq})$-square block of the
  full Jacobian assembled on the pre-permuted admittance matrix
  $\bm{T}\Ybus\bm{T}^{\!\top}$ (Corollary~\ref{cor:boolean}).}
  \label{fig:isomorphism}
\end{figure}
\begin{corollary}[Boolean structural mapping]
\label{cor:boolean}
Apply the structural operator $\struct(\cdot)$ to \eqref{eq:absorb}. Because the projection $\mathcal{P}$ acts as a binary slicer, the operator passes through it:
\begin{equation}
  \struct(\Jred) \;=\; \mathcal{P}\,\struct\bigl(\Jfull(\bm{T}\Ybus\bm{T}^{\!\top},\,\bm{T}\Vbus)\bigr)\,\mathcal{P}^{\!\top}.
  \label{eq:boolean-slicer}
\end{equation}
With $B_Y = \struct(\Ybus)$, the real-linear split duplicates the pattern into four quadrant sub-blocks:
\begin{equation}
  B_{\Jfull} \;=\; \struct(\Jfull) \;=\;
  \begin{bmatrix} B_Y & B_Y \\ B_Y & B_Y \end{bmatrix} \in \{0,1\}^{2n\times 2n},
  \label{eq:bjfull}
\end{equation}
and the equivariance carries this pattern through to the permuted matrix, $\struct(\Jfull(\bm{T}\Ybus\bm{T}^{\!\top},\,\bm{T}\Vbus)) = \bm{T}\,B_{\Jfull}\,\bm{T}^{\!\top}$. Hence,
\begin{equation}
  \struct(\Jred) \;=\; \mathcal{P}\,\bm{T}\,B_{\Jfull}\,\bm{T}^{\!\top}\,\mathcal{P}^{\!\top}.
  \label{eq:boolean-similarity}
\end{equation}
\end{corollary}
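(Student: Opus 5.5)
The plan is to prove Corollary~\ref{cor:boolean} in three short steps: pass $\struct$ through the slicer $\mathcal{P}$, compute the pattern of $\Jfull$ from \eqref{eq:dStheta}--\eqref{eq:dSvm}, and transport that pattern through the permutation using Theorem~\ref{thm:iso}.

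\textbf{Step 1: $\struct$ commutes with the slicer.} Since $\mathcal{P}=[\,\bm{I}_a\;\bm{0}\,]$ is a 0/1 selection matrix, $(\mathcal{P}\bm{M}\mathcal{P}^{\top})_{ij}=M_{ij}$ for $i,j\le a$. Each entry of the reduced matrix is therefore an entry of $\bm{M}$, so $\struct(\mathcal{P}\bm{M}\mathcal{P}^{\top})=\mathcal{P}\,\struct(\bm{M})\,\mathcal{P}^{\top}$ holds entrywise. Applying this to \eqref{eq:absorb} from Corollary~\ref{cor:reduced} gives \eqref{eq:boolean-slicer}.

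\textbf{Step 2: the pattern \eqref{eq:bjfull}.} Each complex sensitivity has the form $\bm{D}_1+\bm{D}_2\overline{\Ybus}\bm{D}_3$ with diagonal $\bm{D}_i$. Its off-diagonal support is that of $\Ybus$, scaled by the entries of $\bm{V}$ and $\hat{\bm{V}}$, which are generically nonzero. Its diagonal is always occupied, because $\Ybus$ has a nonzero diagonal in the Laplacian-like setting of \eqref{eq:ybus}. I will interpret $\struct$ as the \emph{symbolic} pattern, i.e.\ positions that may be nonzero for generic values. This avoids accidental numerical cancellations, for example a real or imaginary part vanishing at flat start. The real and imaginary parts then each inherit $B_Y$, which yields the four quadrants of \eqref{eq:bjfull}.

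\textbf{Step 3: transport through $\bm{T}$.} This is where I expect the main obstacle, namely making sense of $\bm{T}B_{\Jfull}\bm{T}^{\top}$ at state level. Theorem~\ref{thm:iso} is stated for $n\times n$ bus permutations, while the quantity here is a $2n\times 2n$ real matrix.
\begin{itemize}
\item First, apply the theorem with $\bm{T}_{\mathrm{block}}=\diag(\bm{T},\bm{T})$. It gives $\Jfull(\bm{T}\Ybus\bm{T}^{\top},\bm{T}\Vbus)=\bm{T}_{\mathrm{block}}\Jfull\bm{T}_{\mathrm{block}}^{\top}$, since taking real and imaginary parts commutes with conjugation by a real $\bm{T}$.
\item Second, note that a permutation matrix maps patterns exactly: $\struct(\bm{P}\bm{M}\bm{P}^{\top})=\bm{P}\,\struct(\bm{M})\,\bm{P}^{\top}$, because it only relocates entries.
\item Third, compose with the additional state-level reordering used in Corollary~\ref{cor:reduced}, which moves $|\bm{V}|_{\mathrm{PQ}}$ ahead of the dropped states. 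This is again a real permutation of the $2n$ coordinates, so it also maps patterns exactly.
\end{itemize}
Denoting the composite state permutation by $\bm{T}$, as the paper does, gives $\struct(\Jfull(\bm{T}\Ybus\bm{T}^{\top},\bm{T}\Vbus))=\bm{T}B_{\Jfull}\bm{T}^{\top}$.

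Substituting this into Step 1 yields \eqref{eq:boolean-similarity}.
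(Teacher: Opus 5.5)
Your proposal is correct and follows the paper's route. The paper passes $\struct$ through the 0/1 slicer $\mathcal{P}$, gets the four-quadrant copy of $B_Y$ from the diagonally scaled form of \eqref{eq:dStheta}--\eqref{eq:dSvm}, and carries the pattern through the permutation via Theorem~\ref{thm:iso}; your two refinements make precise what the paper leaves implicit: reading $\struct$ as the generic (symbolic) pattern, and treating the $2n\times 2n$ permutation as $\diag(\bm{T},\bm{T})$ composed with the state-level reordering.
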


 Fig.~\ref{fig:isomorphism} shows the IEEE 39-bus system; the matrix patterns confirm Theorem~\ref{thm:iso} and Corollaries~\ref{cor:reduced}--\ref{cor:boolean}.
\subsection{Direct Closed-Form Address Mapping}
By expanding the analytical complex sensitivities \eqref{eq:dStheta}--\eqref{eq:dSvm} along individual columns, the exact structural column locality evaluates by inspection:
\begin{align}
\Bigl[\tfrac{\partial\Sbus}{\partial\bm{\theta}}\Bigr]_{:,k}
&=j\,\diag(\Vbus)\bigl(\overline{I_k}\,\bm{e}_k
-\overline{V_k}\,\overline{[\Ybus]_{:,k}}\bigr), \label{eq:coltheta} \\ \notag
\Bigl[\tfrac{\partial\Sbus}{\partial|\Vbus|}\Bigr]_{:,k}
&=\overline{\hat V_k}\diag(\Vbus)\overline{[\Ybus]_{:,k}}
+\hat V_k\overline{I_k}\,\bm{e}_k.
\end{align}
where $\bm{e}_k$ is the $k$-th unit vector. 
 The equations establish that the numerical derivation of the $k$-th column of the operational Jacobian requires the same $k$-th column of $\Ybus$, leading to direct address mapping instead of generic sparse matrix indices.

Since \eqref{eq:coltheta} executes strictly within decoupled column scopes, the nonzero destination addresses inside the real operational Jacobian $\bm{J}_{\mathrm{red}}$ are computed directly from the static topology $\text{struct}(\bm{Y}_{\mathrm{bus}})$ without matrix traversal. For any nonzero branch $(i,k) \in \text{struct}(\bm{Y}_{\mathrm{bus}})$, its pre-allocated NNZ memory locations can be evaluated via a unified linear step stride

Phase 1 ($\theta$-state column variations, $k \in [0, N_{\mathrm{pv}}+N_{\mathrm{pq}}-1]$):\begin{equation}\sigma_{\theta}(i, k) = c_{\theta_k} + \mathbb{I}(i \in \mathcal{I}_{\mathrm{PQ}}) \cdot a_k + \rho(i, k)\end{equation}

Phase 2 ($|\bm{V}|$-state column variations, $k \in [0, N_{\mathrm{pq}}-1]$):\begin{equation}\sigma_{|V|}(i, k) = c_{|V|k} + \mathbb{I}(i \in \mathcal{I}_{\mathrm{PQ}}) \cdot a_k + \rho(i, k)\end{equation}
The exact memory coordinates for all four implicit Jacobian sub-blocks are dynamically unified by three static topological attributes: the base column pointer $c_{(\cdot)}$, the zero-based row offset $\rho(i,k)$ inside the current column slice, and the isolated active nonzero cardinality $a_k$. The architectural and algebraic superiorities of this mapping are summarized as follows (Table~\ref{tab:placement}):

\textbf{Single-Pass Contiguous Writes}: The completeness ordering $\bm{T}$ isolates all boundary-constrained dropped nodes into the matrix lower-right corner. Consequently, the conditional indicator $\mathbb{I}(\cdot)$ translates into a monotonically increasing linear offset, forcing the hardware execution thread to automatically skip dropped $\mathrm{PV}$ and slack constraints while maintaining uninterrupted single-pass 1D streaming writes.

\textbf{Layout Generality}: Although this single-pass streaming kernel is presented column-wise under the Compressed Sparse Column (CSC) format, the underlying memory dispatch logic generalizes symmetrically to Compressed Sparse Row (CSR) layouts. The global target memory addresses remain entirely predictable a priori across rows, guaranteeing maximum execution throughput on modern SIMD-driven processors.
\begin{table}[htbp]
  \caption{Consequences of projection placement.}
  \label{tab:placement}
  \centering
  \begin{tabularx}{\columnwidth}{@{}lXX@{}}
    \toprule
     & Conventional / \cite{schaefer2018jacobian} & This work \\
    \midrule
    Placement of $T$       & $\bm J_{full}$ & $\Ybus$ \\
    $T(\cdot)T^{\!\top}$ in loop & every iteration & never \\
    Matrix construction        & block assembly  & direct contiguous write \\
    Write pattern              & scattered       & sequential, contiguous \\
    Storage order              & CSR             & CSC \\
    \bottomrule
  \end{tabularx}
\end{table}

\subsection{The Numeric Fill Operator}
With the pattern fixed by \eqref{eq:boolean-similarity}, the numeric phase only writes values from \eqref{eq:dStheta}--\eqref{eq:dSvm} to the matrix. By applying the general evaluation operators of those formulas uniformly to every entry, including the diagonal, we obtain the intermediate matrices $\mathcal{U}_\theta$ and $\mathcal{U}_v$. The $\diag(\overline{\Ibus})$ terms are strictly diagonal; thus, $\mathcal{U}_\theta$ and $\mathcal{U}_v$ coincide with the exact derivatives off the diagonal, while the diagonal entries can be independently corrected via localized linear patches based on
\begin{equation}
\tfrac{\partial\Sbus}{\partial\bm{\theta}}
=\mathcal{U}_\theta+j\,\diag(\Sbus^{(\nu)}),
\tfrac{\partial\Sbus}{\partial|\Vbus|}
=\mathcal{U}_v+\diag!\bigl(\Sbus^{(\nu)}!\oslash|\Vbus^{(\nu)}|\bigr).
\label{eq:kernel}
\end{equation}
Per-iteration assembly is therefore reduced to a branch-free $\Theta(\mathrm{nnz})$ sweep—evaluating a uniform relation for every nonzero entry without any $i\stackrel{?}{=}k$ tests—followed by a $\Theta(n)$ constant-time diagonal modification.

\section{Implementation in RustPower}
\label{sec:impl}

\subsection{Data-Oriented Architecture}
\texttt{RustPower} realizes the proposed method in Rust~\cite{matsakis2019rust} on a data-oriented
ECS architecture~\cite{bevyengine,nystrom2014data}, a design carried over from the prior real-time ECSGrid work~\cite{cheng2023ecsgrid}. All power system entities are stored in statically typed, columnar arrays in ECS archetype tables, where archetypes are similar to \textit{Bus} or \textit{Line} tables in pandapower.
 Networks are
read in the pandapower JSON/CSV format for reproducibility, and the linear solve
is delegated, behind a single \texttt{Solve} trait, to an interchangeable backend such as KLU~\cite{davis2010klu}. The proposed algorithm is not tied to \texttt{RustPower} or Rust and is applicable to other implementations without significant modifications.

\subsection{Sparse Storage and Contiguity of Assembly}
The column-decoupled nature of \eqref{eq:coltheta} aligns natively with the Compressed Sparse Column (CSC) layout. 
By eliminating intermediate representations, column $k$ of $\bm{Y}_{\mathrm{bus}}$ is streamed as a contiguous memory block to directly populate the corresponding columns of $\bm{J}_{\mathrm{red}}$ in a single pass. 
This strictly sequential data flow guarantees optimal spatial locality, enabling the hardware pre-fetcher to function ideally and effectively eliminating cache steering bottlenecks~\cite{drepper2007what}.

While the underlying mapping logic can be applied symmetrically to Compressed Sparse Row (CSR) layouts, state-of-the-art sparse direct solvers like KLU~\cite{davis2010klu} inherently execute factorization column-wise and strictly mandate CSC inputs. 
To exploit this architectural alignment, \texttt{RustPower} maintains a unified CSC pipeline across the entire simulation—from the initial network admittance matrix, through the symbolic structure mapping, to the final linear solver backend. 
This continuous format consistency completely circumvents expensive matrix transposition overheads within the \revision{iteration loop}.

\subsection{The Symbolic Phase}
Algorithm~\ref{alg:symbolic} formalizes this one-time symbolic pass, tracing the exact coordinate and offset allocation required to construct the locked sparsity pattern of $\bm{J}_{\mathrm{red}}$. 

The execution progresses sequentially through the columns of the static network admittance matrix $\bm{Y}_{\mathrm{bus}}$. For each active column $k$ (Line~1), a localized binary search is performed on its sorted row indices (Line~2). Leveraging the block-ordered node layout $[\mathrm{PQ}\mid\mathrm{PV}\mid\mathrm{slack}]$, this bisection splits the column into active and non-active sub-segments, recording the exact boundaries ($\mathit{pq\_end}[k]$, $\mathit{active\_end}[k]$) and caching the absolute memory index of the diagonal entry ($\mathit{diag}[k]$). 

Following the bisection, the active row indices bounded within $[0, n_a)$ are appended directly to the global value layout of $\bm{J}_{\mathrm{red}}$ to instantiate the phase-state ($\theta_k$) column (Line~3). Because the pre-permuted $P\bm{Y}_{\mathrm{bus}}P^{\top}$ topology guarantees that row indices are strictly monotonic, this appending operation translates to a zero-branch memory copy of the active slice, during which the structural boundary offsets for the $J_{11}$ and $J_{21}$ sub-blocks are locked. If the current node satisfies the PQ criteria ($k < N_{\mathrm{pq}}$), this identical row structure is mirrored sequentially to establish the voltage-magnitude ($|V|_k$) column, capturing the corresponding $J_{12}$ and $J_{22}$ offsets (Line~4). Finally, the column structural configuration is closed into the sparse column pointers (Line~5). The resulting static metadata functions as a permanent \textit{symbolic cache}, leaving the subsequent numeric phase entirely free from runtime block-stacking or index-shifting computations.

\begin{algorithm}[t]
\caption{Symbolic phase --- build CSC pattern of $\Jred$}
\label{alg:symbolic}
\begin{algorithmic}[1]
\Require $\Ybus$ (CSC); $N_{pq}$, $n_a=N_{pq}+N_{pv}$; buses $[\mathrm{PQ}\mid\mathrm{PV}\mid\mathrm{slack}]$
\Ensure $\mathit{col\_ptrs}$, $\mathit{row\_idx}$ of $\Jred$; $\mathit{pq\_end}[k]$, $\mathit{active\_end}[k]$, $\mathit{diag}[k]$; block start offsets
\For{$k=0,\dots,n_a-1$}
  \State bisect rows of $\Ybus^{:,k}$ at $N_{pq}$ and $n_a$; record $\mathit{pq\_end}[k]$, $\mathit{active\_end}[k]$, $\mathit{diag}[k]$
  \State append rows $[0,n_a)$ of $\Ybus^{:,k}$ to $\mathit{row\_idx}$ as $\theta_k$ column of $\Jred$; record J$_{11}$, J$_{21}$ start offsets
  \If{$k<N_{pq}$} append same rows as $|V|_k$ column; record J$_{12}$, J$_{22}$ start offsets \EndIf
  \State seal column in $\mathit{col\_ptrs}$
\EndFor
\end{algorithmic}
\end{algorithm}

\subsection{The Direct Single-Pass Numeric Phase}
The per-iteration assembly kernel, formalized as \textsc{Fill} (Algorithm~\ref{alg:fill}), executes as a branch-free, streaming numeric pipeline to fill $\Jred$ directly. 
By utilizing the pre-computed metadata from the symbolic pass (Algorithm~\ref{alg:symbolic}), the numeric kernel completely eliminates runtime memory allocation and spatial coordinate partitioning. 
Leveraging the permutation corollaries derived in Section~\ref{sec:symbolic}, the runtime completely isolates the PQ (Lines~1--5) and PV (Lines~6--10) loops, thereby eliminating all conditional branching inside the hot execution path. 
Because both the input layout of $\bm{Y}_{\mathrm{bus}}$ and the destination offsets of $\bm{J}_{\mathrm{red}}$ are mathematically locked, this element-wise fill operates in a single, direct pass, simultaneously streaming computed sensitivities into multiple distinct matrix memory starting points.

By ensuring contiguous memory access, this algorithm allows standard compilers to automatically inject SIMD vectorization and achieve maximum cache utilization. 
Furthermore, the total elimination of branch divergence and index arithmetic satisfies the core architectural prerequisites for GPU execution, making this numeric kernel natively suited for massively parallel acceleration.

\begin{algorithm}[t]
\caption{\textsc{Fill} --- in-place numeric assembly of $\Jred$}
\label{alg:fill}
\begin{algorithmic}[1]
\Require $\Ybus$, $\Vbus$, $\Vnorm$, $\Ibus=\Ybus\Vbus$; cache from Alg.~\ref{alg:symbolic}
\Ensure value array of $\Jred$ overwritten in place
\For{$k=0,\dots,N_{pq}-1$} \Comment{PQ column: write $\theta_k$ and $|V|_k$}
  \For{$(i,k)\in\mathrm{nz}(\Ybus),\;i<n_a$}
    \State write $\mathcal{U}_\theta(i,k)$, $\mathcal{U}_v(i,k)$~\eqref{eq:coltheta} to $\mathrm{J}_{11}$, $\mathrm{J}_{12}$; also $\mathrm{J}_{21}$, $\mathrm{J}_{22}$ if $i<N_{pq}$
  \EndFor
  \State add $(-Q_k,\,{+}P_k,\,P_k/|V_k|,\,Q_k/|V_k|)$ to diagonal entries~\eqref{eq:kernel}
\EndFor
\For{$k=N_{pq},\dots,n_a-1$} \Comment{PV column: write $\theta_k$ only}
  \For{$(i,k)\in\mathrm{nz}(\Ybus),\;i<n_a$}
    \State write $\mathcal{U}_\theta(i,k)$~\eqref{eq:coltheta} to $\mathrm{J}_{11}$; also $\mathrm{J}_{21}$ if $i<N_{pq}$
  \EndFor
  \State add $-Q_k$ to $\mathrm{J}_{11}$ diagonal entry~\eqref{eq:kernel}
\EndFor
\end{algorithmic}
\end{algorithm}

 \begin{figure*}[t]
  \centering
  \includegraphics[width=2\columnwidth,  trim={0 0.5cm 0 0},
        clip]{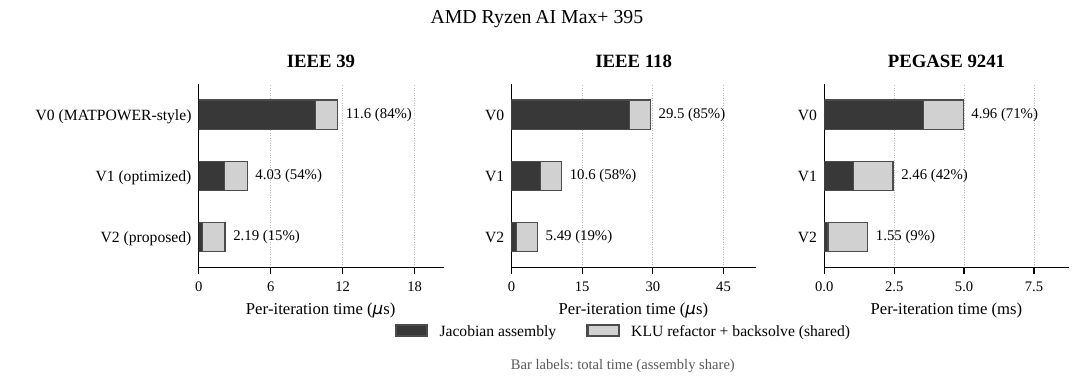}
  \caption{Iteration cost breakdown on the test systems: assembly and factorisation of three methods in \texttt{RustPower}. The proposed method (V2) significantly reduces the assembly share, making factorisation the dominant cost. \revision{Measured on the AMD Ryzen AI Max+~395; factorisation includes shared KLU numeric refactorization and back-substitution, excluding one-time initialization.}}
  \label{fig:breakdown}
\end{figure*}

\begin{algorithm}[t]
\caption{Newton--Raphson with symbolic-cached assembly}
\label{alg:numeric}
\begin{algorithmic}[1]
\Require $\Ybus$, $\Sbus^{\mathrm{sp}}$, $\Vbus$, $N_{pv},N_{pq}$, tolerance $\tau$
\State $\mathit{cache}\gets$ Algorithm~\ref{alg:symbolic} \Comment{symbolic phase, once}
\State allocate the value array of $\Jred$ \Comment{the only Jacobian allocation}
\For{$\nu=0,1,2,\dots$}
  \State $\Ibus\gets\Ybus\Vbus$;\quad $\bm{F}\gets$ power mismatch~\eqref{eq:sbus}
  \If{$\|\bm{F}\|<\tau$} \State \Return $\Vbus$ \EndIf
  \State run Algorithm~\ref{alg:fill} \Comment{overwrite $\Jred$ values in place}
  \State solve $\Jred\,\Delta\bm{x}=-\bm{F}$ \Comment{structure from the cache}
  \State $\Vbus\gets$ voltage update from $\Delta\bm{x}$
\EndFor
\end{algorithmic}
\end{algorithm}
The global integration of these modules is formalized in Algorithm~\ref{alg:numeric}. 
Rather than proposing a modified iterative framework, Algorithm~\ref{alg:numeric} serves as a straightforward replacement for the standard sequence. 
By exposing a unified trait within \texttt{RustPower}, the offline symbolic pass (Algorithm~\ref{alg:symbolic}) and the in-place numeric kernel (Algorithm~\ref{alg:fill}) simply substitute the conventional assembly block, allowing the simulator to interface with interchangeable direct solvers like KLU.

\section{Case Studies}
\label{sec:results}

\subsection{Experimental Setup}
\label{sec:setup}
Performance was measured on an \revision{AMD Ryzen AI Max+~395} CPU with 32~GB memory. Three networks were tested \revision{ for the internal ablation}: the IEEE 39-bus, the IEEE 118-bus~\cite{ieee118bus}, and the PEGASE 9241-bus~\cite{fliscounakis2013contingency} systems. On IEEE 39, transformers are modeled as lines to match pandapower for validation. All can be accessed from the \texttt{RustPower} repository~\cite{rustpower}. \revision{The external comparison additionally includes PEGASE 1354 and uses pandapower network cases.}

We evaluate three paths compiled within the same binary, \revision{using the same KLU backend:}
\begin{itemize}
  \item V0 --- A direct port of MATPOWER. It computes complex matrices~\eqref{eq:dStheta}--\eqref{eq:dSvm} using diagonal products and rebuilds the Jacobian via block stacking on every iteration.
  \item V1 --- An optimized formulation. It uses symbolic permutation from \texttt{RustPower}, a tuned derivative function and cached buffers, but keeps the three-step assembly. It is similar to~\cite{schaefer2018jacobian} but with the pre-permuted partition of PQ,PV nodes.
  \item V2 --- The proposed method of Sections~\ref{sec:symbolic}--\ref{sec:impl}\revision{ in \texttt{RustPower}~0.5.2}.
\end{itemize}
Results are averaged over \revision{seven batches of 300 solves}. We compare these paths against pandapower\revision{~3.5.4 with Numba}~\cite{thurner2018pandapower} and LightSim2Grid\revision{~1.0.0}~\cite{lightsim2grid}. LightSim2Grid uses a C++ backend on a pre-built model. \revision{Given the rapid evolution of power-flow software, comparisons are performed using fixed software releases.}

\revision{The iteration loop timings in Table~\ref{tab:bench} are averages of 20 repeated complete power-flow solves after warm-up, with initialized network models and cached solver data reused and the initial voltage reset for each solve.}

\subsection{Accuracy}
 \revision{All three computational paths (V0, V1, and V2) employ the same Newton-Raphson formulation, with flat-start initialization and the $\ell_\infty$-norm power-mismatch convergence criterion adopted from pandapower. As expected, all solutions require exactly the same number of Newton iterations across every tested case. The computed solutions exhibit near-machine-precision agreement with pandapower, with maximum voltage-magnitude and phase-angle deviations of $3.38\times10^{-13}$ p.u. and $3.92\times10^{-11}$ degrees, respectively, across all tested systems.}. 

\subsection{Internal Ablation: Isolating the Symbolic Gain}
Table~\ref{tab:ablation} reports solve times. Transitioning from V0 to V1 yields a \revision{$2.0\times$ to $2.3\times$} speedup via memory tuning. Transitioning from V1 to V2 provides an additional \revision{$1.6\times$ to $2.0\times$} speedup. Because these totals include the same solve, they understate the assembly speedup.

Table~\ref{tab:asmonly} isolates the assembly step. Here, the symbolic assembly is \revision{$6.0\times$ to $7.8\times$} faster than the tuned path, and \revision{$24.4\times$ to $30.1\times$} faster than the V0 port. The speedup remains independent of the network size.

Profiling a single iteration reveals the runtime allocation (Fig.~\ref{fig:breakdown}). In the standard port, assembly consumes \revision{$84\%$, $85\%$, and $71\%$} of the iteration time. Tuning reduces this share to \revision{$54\%$, $58\%$, and $42\%$}. The proposed method eliminates the structural bottleneck, dropping the assembly share to \revision{$15\%$, $19\%$, and $9\%$}. Consequently, numeric factorization becomes the primary cost.

This speedup stems from structural simplification. Traditional assembly traverses nonzeros repeatedly to form intermediate matrices, sub-blocks, and the stacked Jacobian. The symbolic pass eliminates intermediates. By pre-mapping where each nonzero lands, the numeric phase reads the admittance matrix once and writes the Jacobian once.
\begin{table}[t]
\caption{Ablation of Jacobian assembly: solve time (ms) with same solver.}
\label{tab:ablation}
\centering
\setlength{\tabcolsep}{3pt}
\begin{tabular*}{\columnwidth}{@{\extracolsep{\fill}}lrrrrr@{}}
\toprule
System & V0 & V1 & V2 & V0/V2 & V1/V2 \\
\midrule
IEEE 39 & \revision{$0.0500$} & \revision{$0.0228$} & \revision{$\bm{0.0114}$} & \revision{$4.4\times$} & \revision{$2.0\times$} \\
IEEE 118 & \revision{$0.1253$} & \revision{$0.0540$} & \revision{$\bm{0.0277}$} & \revision{$4.5\times$} & \revision{$2.0\times$} \\
PEGASE 9241 & \revision{$35.372$} & \revision{$17.850$} & \revision{$\bm{11.125}$} & \revision{$3.2\times$} & \revision{$1.6\times$} \\
\bottomrule
\end{tabular*}
\end{table}

\begin{table}[t]
\caption{Isolated assembly step: assembly time ($\mu$s) excluding solve.}
\label{tab:asmonly}
\centering
\setlength{\tabcolsep}{3pt}
\begin{tabular*}{\columnwidth}{@{\extracolsep{\fill}}lrrrrr@{}}
\toprule
System & V0 & V1 & V2 & V0/V2 & V1/V2 \\
\midrule
IEEE 39 & \revision{$9.707$} & \revision{$2.169$} & \revision{$\bm{0.323}$} & \revision{$30.1\times$} & \revision{$6.7\times$} \\
IEEE 118 & \revision{$24.998$} & \revision{$6.153$} & \revision{$\bm{1.023}$} & \revision{$24.4\times$} & \revision{$6.0\times$} \\
PEGASE 9241 & \revision{$3547.383$} & \revision{$1040.267$} & \revision{$\bm{134.072}$} & \revision{$26.5\times$} & \revision{$7.8\times$} \\
\bottomrule
\end{tabular*}
\end{table}

\subsection{Comparison with Established Tools}

Table~\ref{tab:bench} compares \revision{the measured times of pandapower~3.5.4, LightSim2Grid~1.0.0, and \texttt{RustPower}~0.5.2. RustPower reports both grid execution with post-processing and solver-core time of iteration loops.}

\revision{Including post-processing, \texttt{RustPower}'s speedup over pandapower~3.5.4 with Numba falls from $154.3\times$ on IEEE~39 to $11.7\times$ on PEGASE~9241 (Table~\ref{tab:bench}); Amdahl's law limits the gain from faster assembly as factorization takes a larger share on larger systems, however, it still provides significant speed-up.  }

\revision{In Table~\ref{tab:bench}, ``RP grid'' includes dynamic Bevy ECS system execution and post-processing of bus voltages and powers and line flows, whereas LightSim2Grid operates directly on arrays; the additional time beyond ``RP core'' decreases from 231~ns per bus on IEEE~39 to 19~ns per bus on PEGASE~9241 and only takes $1.7\%$ of total execution time on PEGASE~9241.}

\begin{table}[t]
\caption{\revision{External Tools Iteration Loop Comparison (ms)}}
\label{tab:bench}
\centering
\small
\setlength{\tabcolsep}{1.75pt}
\begin{tabular*}{\columnwidth}{@{\extracolsep{\fill}}lrrrrrr@{}}
\toprule
System & PP & LS2G & \revision{RP grid} & \revision{RP core} & vs LS2G & vs PP \\
\midrule
IEEE 39 & \revision{$3.241$} & \revision{$0.020$} & \revision{$0.021$} & \revision{$0.012$} & \revision{$0.95\times$} & \revision{$154.3\times$} \\
IEEE 118 & \revision{$4.112$} & \revision{$0.052$} & \revision{$0.039$} & \revision{$0.028$} & \revision{$1.33\times$} & \revision{$105.4\times$} \\
\revision{PEGASE 1354} & \revision{$15.830$} & \revision{$1.041$} & \revision{$0.589$} & \revision{$0.554$} & \revision{$1.77\times$} & \revision{$26.9\times$} \\
PEGASE 9241 & \revision{$126.408$} & \revision{$13.766$} & \revision{$10.804$} & \revision{$10.625$} & \revision{$1.27\times$} & \revision{$11.7\times$} \\
\bottomrule
\end{tabular*}

\smallskip
{\footnotesize PP = pandapower\revision{~3.5.4 (Numba)}; LS2G = LightSim2Grid\revision{~1.0.0}; \revision{RP = RustPower 0.5.2; grid includes post-processing. Speedups are reference time divided by RP grid time.}}
\end{table}
\section{Conclusion}
\label{sec:conclusion}

This paper delivers three contributions for steady-state power-flow analysis. First, we mathematically prove the permutation equivariance of complex sensitivity, verifying that symbolic permutation pre-applied to the admittance matrix eliminates runtime projections. Second, we develop a generic algorithm that couples symbolic permutation with branchless filling to assemble the global Jacobian in a single pass. Third, we realize this framework in the \texttt{RustPower} solver. Benchmarks demonstrate significant acceleration of power-flow computations, \revision{delivering a $3.2\times$ to $4.5\times$ solve speedup over V0 and $1.6\times$ to $2.0\times$ over V1, while reducing assembly overhead by $24.4$--$30.1\times$ and $6.0$--$7.8\times$, respectively} without altering the numeric results or the convergence.

The proposed methodology, the generic algorithm, and the \texttt{RustPower} toolchain establish a new foundation for modern power system analysis. The new method has potential for faster optimal power flow, pre-conditioning, and continuous Newton power-flow solvers. It may also be applicable to transient stability simulation, where streaming logic provides optimal performance on multi-GPU architectures~\cite{jalilimarandi2012gpu} and even FPGAs~\cite{shiqi}. Ultimately, while artificial intelligence increases the demand for advanced computation, the proposed work helps bridge the widening gap between evolving computing paradigms and traditional analysis tools for modern power systems.
\bibliographystyle{IEEEtran}
\bibliography{./bib/IEEEabrv,./bib/rustpower}

@STRING{IEEE_J_PWRS       = "{IEEE} Trans. Power Syst."}

@book{bergen2000power,
  author    = {Bergen, A. R. and Vittal, V.},
  title     = {Power Systems Analysis},
  edition   = {2nd},
  publisher = {Prentice Hall},
  year      = {2000}
}

@article{tinney1967power,
  author  = {Tinney, W. F. and Hart, C. E.},
  title   = {Power Flow Solution by {Newton's} Method},
  journal = {IEEE Trans. Power App. Syst.},
  year    = {1967},
  volume  = {PAS-86},
  number  = {11},
  pages   = {1449--1460},
  doi     = {10.1109/TPAS.1967.291823}
}

@article{tinney1967direct,
  author  = {Tinney, W. F. and Walker, J. W.},
  title   = {Direct Solutions of Sparse Network Equations by Optimally Ordered Triangular Factorization},
  journal = {Proc. IEEE},
  year    = {1967},
  volume  = {55},
  number  = {11},
  pages   = {1801--1809},
  doi     = {10.1109/PROC.1967.6011}
}

@article{davis2010klu,
  author  = {Davis, T. A. and Natarajan, E. P.},
  title   = {Algorithm 907: {KLU}, A Direct Sparse Solver for Circuit Simulation Problems},
  journal = {{ACM} Trans. Math. Softw.},
  volume  = {37},
  number  = {3},
  pages   = {36:1--36:17},
  year    = {2010},
  doi     = {10.1145/1824801.1824814}
}

@misc{rustpower,
  author       = {Cheng, Tianshi},
  title        = {{RustPower}: A Data-Oriented Open-Source Power-Flow Solver in {Rust}},
  howpublished = {\url{https://github.com/chengts95/rustpower}},
  year         = {2024}
}

@article{zimmerman2011matpower,
  author  = {Zimmerman, R. D. and Murillo-Sanchez, C. E. and Thomas, R. J.},
  title   = {{MATPOWER}: Steady-State Operations, Planning, and Analysis Tools for Power Systems Research and Education},
  journal = IEEE_J_PWRS,
  year    = {2011},
  volume  = {26},
  number  = {1},
  pages   = {12--19},
  doi     = {10.1109/TPWRS.2010.2051168}
}

@article{thurner2018pandapower,
  author  = {Thurner, L. and Scheidler, A. and Schäfer, F. and Menke, J. and Dollichon, J. and Meier, F. and Meinecke, S. and Braun, M.},
  title   = {pandapower --- {An} Open-Source {Python} Tool for Convenient Modeling, Analysis, and Optimization of Electric Power Systems},
  journal = IEEE_J_PWRS,
  year    = {2018},
  volume  = {33},
  number  = {6},
  pages   = {6510--6521},
  doi     = {10.1109/TPWRS.2018.2829021}
}

@inproceedings{schaefer2018jacobian,
  author    = {Sch{\"a}fer, Florian and Braun, Martin},
  title     = {An Efficient Open-Source Implementation to Compute the Jacobian Matrix for the {Newton-Raphson} Power Flow Algorithm},
  booktitle = {Proc. IEEE PES Innov. Smart Grid Technol. Conf. Europe (ISGT-Europe)},
  year      = {2018},
  pages     = {1--6},
  note      = {arXiv:1804.06742},
  doi       = {10.1109/ISGTEurope.2018.8571471}
}

@misc{lightsim2grid,
  author       = {Donnot, Benjamin},
  title        = {{LightSim2Grid}: A {C++} Backend Targeting the {Grid2Op} Platform for Power-System Simulation},
  howpublished = {\url{https://github.com/BDonnot/lightsim2grid}},
  year         = {2020}
}

@misc{pypower,
  author       = {Lincoln, Richard},
  title        = {{PYPOWER}: A {Python} Port of {MATPOWER}},
  howpublished = {\url{https://pypi.python.org/pypi/PYPOWER}},
  year         = {2015}
}

@INPROCEEDINGS{ryu_gpuopf,
  author={Ryu, Minseok and Byeon, Geunyeong and Kim, Kibaek},
  booktitle={Proc. IEEE Int. Parallel Distrib. Process. Symp. (IPDPS)},
  title={A {GPU}-Accelerated Distributed Algorithm for Optimal Power Flow in Distribution Systems},
  year={2025},
  volume={},
  number={},
  pages={702-711},
  doi={10.1109/IPDPS64566.2025.00068}}

@ARTICLE{zhou_gpuPPF,
  author={Zhou, Gan and Bo, Rui and Chien, Lungsheng and Zhang, Xu and Yang, Shengchun and Su, Dawei},
  journal={IEEE Trans. Power Syst.},
  title={{GPU}-Accelerated Algorithm for Online Probabilistic Power Flow},
  year={2018},
  volume={33},
  number={1},
  pages={1132-1135},
  doi={10.1109/TPWRS.2017.2756339}}

@ARTICLE{zhou2017gpu,
  author={Zhou, Gan and Feng, Yanjun and Bo, Rui and Chien, Lungsheng and Zhang, Xu and Lang, Yansheng and Jia, Yupei and Chen, Zhengping},
  journal={IEEE Trans. Smart Grid},
  title={{GPU}-Accelerated Batch-{ACPF} Solution for {N-1} Static Security Analysis},
  year={2017},
  volume={8},
  number={3},
  pages={1406-1416},
  doi={10.1109/TSG.2016.2600587}}

@article{fliscounakis2013contingency,
  author  = {Fliscounakis, S. and Panciatici, P. and Capitanescu, F. and Wehenkel, L.},
  title   = {Contingency Ranking With Respect to Overloads in Very Large Power Systems Taking Into Account Uncertainty, Preventive, and Corrective Actions},
  journal = IEEE_J_PWRS,
  year    = {2013},
  volume  = {28},
  number  = {4},
  pages   = {4909--4917},
  doi     = {10.1109/TPWRS.2013.2251015}
}

@techreport{ieee118bus,
  author      = {{University of Washington}},
  title       = {{IEEE} 118-Bus Test Case},
  institution = {Power Systems Test Case Archive, University of Washington},
  year        = {1993},
  url         = {https://labs.ece.uw.edu/pstca/pf118/pg_tca118bus.htm}
}

@misc{bevyengine,
  author       = {{Bevy Contributors}},
  title        = {Bevy: A Data-Driven Game Engine Built in {Rust}},
  howpublished = {\url{https://bevyengine.org}},
  year         = {2020}
}

@inproceedings{nystrom2014data,
  author    = {Nystrom, Robert},
  title     = {Data Locality (Game Programming Patterns)},
  booktitle = {Game Programming Patterns},
  year      = {2014},
  publisher = {Genever Benning},
  url       = {https://gameprogrammingpatterns.com/data-locality.html}
}

@article{cheng2023ecsgrid,
  author  = {Cheng, Tianshi and Duan, Tong and Dinavahi, Venkata},
  title   = {{ECS-Grid}: Data-Oriented Real-Time Simulation Platform for Cyber-Physical Power Systems},
  journal = {IEEE Trans. Ind. Informat.},
  year    = {2023},
  volume  = {19},
  number  = {11},
  pages   = {11128--11138},
  doi     = {10.1109/TII.2023.3244329}
}

@book{matsakis2019rust,
  author    = {Klabnik, Steve and Nichols, Carol},
  title     = {The {Rust} Programming Language},
  publisher = {No Starch Press},
  year      = {2019},
  url       = {https://doc.rust-lang.org/book/}
}

@techreport{drepper2007what,
  author      = {Drepper, Ulrich},
  title       = {What Every Programmer Should Know About Memory},
  institution = {Red Hat, Inc.},
  year        = {2007},
  url         = {https://www.akkadia.org/drepper/cpumemory.pdf}
}

@article{milano2009open,
  author  = {Milano, F.},
  title   = {An Open Source Power System Analysis Toolbox},
  journal = IEEE_J_PWRS,
  year    = {2005},
  volume  = {20},
  number  = {3},
  pages   = {1199--1206},
  doi     = {10.1109/TPWRS.2005.851911}
}

@ARTICLE{shiqi,
  author={Cao, Shiqi and Lin, Ning and Dinavahi, Venkata},
  journal={IEEE Trans. Power Syst.},
  title={Faster-Than-Real-Time Hardware Emulation of Extensive Contingencies for Dynamic Security Analysis of Large-Scale Integrated {AC/DC} Grid},
  year={2023},
  volume={38},
  number={1},
  pages={861-871},
  doi={10.1109/TPWRS.2022.3161561}}

@article{jalilimarandi2012gpu,
  author  = {Jalili-Marandi, V. and Zhou, Z. and Dinavahi, V.},
  title   = {Large-Scale Transient Stability Simulation of Electrical Power Systems on Multiple {GPU}s},
  journal = {IEEE Trans. Parallel Distrib. Syst.},
  year    = {2012},
  volume  = {23},
  number  = {7},
  pages   = {1255--1265},
  doi     = {10.1109/TPDS.2011.220}
}

@article{wang2021fast,
  author    = {Zhenqi Wang and Sebastian Wende-von Berg and Martin Braun},
  title     = {Fast parallel {Newton--Raphson} power flow solver for large number of system calculations with {CPU} and {GPU}},
  journal   = {Sustain. Energy Grids Networks},
  volume    = {27},
  pages     = {100483},
  year      = {2021},
  issn      = {2352-4677},
}

@article{iwamoto1981load,
  author  = {Iwamoto, M. and Tamura, Y.},
  title   = {A Load Flow Calculation Method for Ill-Conditioned Power Systems},
  journal = {IEEE Trans. Power App. Syst.},
  year    = {1981},
  volume  = {PAS-100},
  number  = {4},
  pages   = {1736--1743},
  doi     = {10.1109/TPAS.1981.316511}
}

@ARTICLE{yu2025cfn,
  author={Yu, Ruizhi and Gu, Wei and Xu, Yijun and Lu, Shuai and Zhang, Suhan},
  journal={CSEE J. Power Energy Syst.}, 
  title={Semi-Implicit Continuous Newton Method for Power Flow Analysis}, 
  year={2025},
  volume={11},
  number={4},
  pages={1957-1961},
  doi={10.17775/CSEEJPES.2024.05770}}

@article{idema2012scalable,
  author  = {Idema, R. and Lahaye, D. J. P. and Vuik, C. and van der Sluis, L.},
  title   = {Scalable {Newton-Krylov} Solver for Very Large Power Flow Problems},
  journal = {IEEE Trans. Power Syst.},
  year    = {2012},
  volume  = {27},
  number  = {4},
  pages   = {2248--2257},
  doi     = {10.1109/TPWRS.2012.2190078}
}

@article{yetkin2023recycling,
  author  = {E. Fatih Yetkin and Ceylan, Oguzhan},
  title   = {Recycling {Newton--Krylov} Algorithm for Efficient Solution of Large Scale Power Systems},
  journal = {Int. J. Electr. Power Energy Syst.},
  volume  = {144},
  pages   = {108559},
  year    = {2023},
  issn    = {0142-0615},
  doi     = {10.1016/j.ijepes.2022.108559}
}

@article{pires2019robust,
  author  = {Pires, Robson and Mili, Lamine and Chagas, Guilherme},
  title   = {Robust Complex-Valued {Levenberg-Marquardt} Algorithm as Applied to Power Flow Analysis},
  journal = {Int. J. Electr. Power Energy Syst.},
  volume  = {113},
  pages   = {383--392},
  year    = {2019},
  issn    = {0142-0615},
  doi     = {10.1016/j.ijepes.2019.05.032}
}

@inproceedings{lagace2008improving,
  author    = {Lagace, P. J. and Vuong, M. H. and Kamwa, I.},
  booktitle = {Proc. IEEE Power Energy Soc. Gen. Meeting},
  title     = {Improving Power Flow Convergence by {Newton-Raphson} with a {Levenberg-Marquardt} Method},
  year      = {2008},
  pages     = {1--6},
  doi       = {10.1109/PES.2008.4596138}
}
    \vspace{-0.5in}
    \begin{IEEEbiography}[{\includegraphics[width=1in,height=1.25in,clip,keepaspectratio]{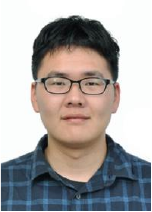}}]{Tianshi Cheng}(S'19-M'24)
    received the B.Eng. degree in electrical engineering and automation from Southeast University, China, in 2017 , and the Ph.D. degree in electrical and computer engineering from the University of Alberta, Canada, in 2024. From 2017 to 2018, he was a substation automation engineer at NARI Group Corporation (State Grid Electric Power Research Institute), China. He is currently working as a Simulation Specialist at RTDS Technologies Inc., Winnipeg, MB, Canada. His research interests include electromagnetic transient simulation, transient stability analysis, steady-state analysis, real-time hardware-in-the-loop simulation, parallel processing, microgrids, and power electronics.
    \end{IEEEbiography}
     \vspace{-0.5in}
    \begin{IEEEbiography}[{\includegraphics[width=1in,height=1.25in, clip,keepaspectratio]{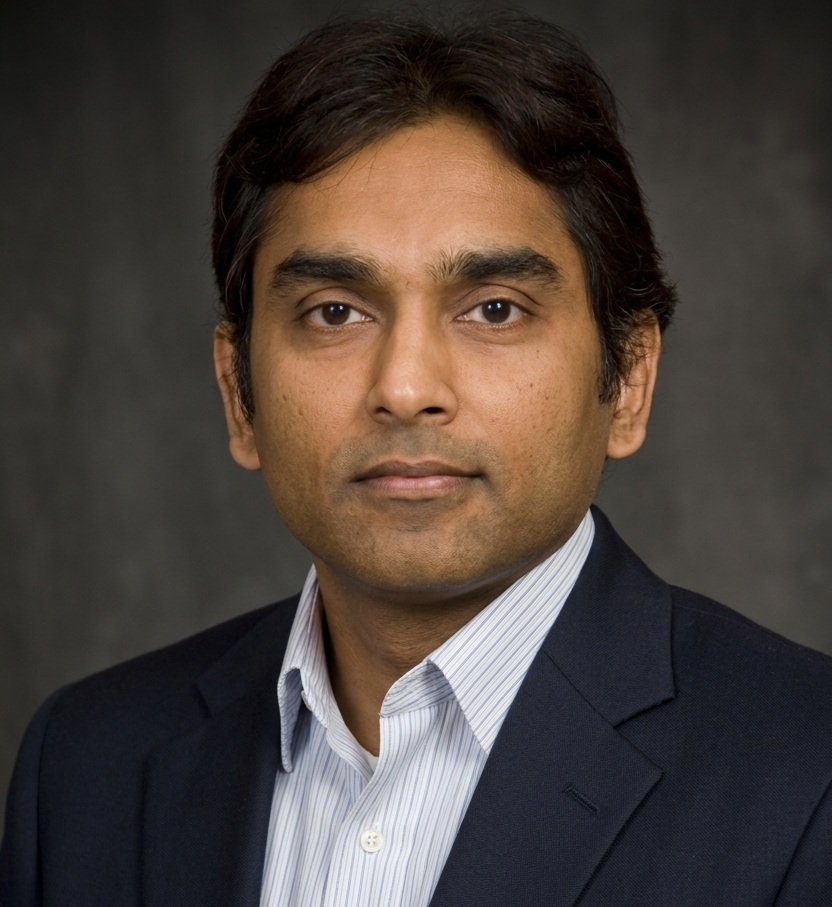}}]{Venkata Dinavahi}(Fellow, IEEE) received the B.Eng. degree in electrical engineering from the Visvesvaraya National Institute of Technology, Nagpur, India, the M.Tech. degree in electrical engineering from the Indian Institute of Technology, Kanpur, India, and the Ph.D. degree in electrical and computer engineering from the University of Toronto, Ontario, Canada. He is currently a Professor with the Department of Electrical and Computer Engineering, University of Alberta, Edmonton, Canada. His research interests include real-time simulation of multi-domain sustainable energy systems, electromagnetic transients, device-level modeling, artificial intelligence, large-scale systems, and parallel and distributed computing. Dr. Dinavahi is a Fellow of the Engineering Institute of Canada (EIC), the Asia-Pacific Artificial Intelligence Association (AAIA), the International Artificial Intelligence Industry Alliance (AIIA). He is an honorary member of the National Academy of Artificial Intelligence (NAAI), and a Professional Engineer in the Province of Alberta, Canada.
    \end{IEEEbiography}
\end{document}